\pgfplotsset{compat=newest}
\pgfplotsset{compat=newest,
	/pgfplots/ybar legend/.style={
		/pgfplots/legend image code/.code={%
			\draw[##1,/tikz/.cd,bar width=3pt,yshift=-0.2em,bar shift=0pt]
			plot coordinates {(0cm,0.8em)};},
	},}
\newcommand\myeqa{\stackrel{\mathclap{\mbox{($a$)}}}{=}}
\newcommand\myeqb{\stackrel{\mathclap{\mbox{($b$)}}}{=}}
\newcommand\myeqc{\stackrel{\mathclap{\mbox{(\scriptsize$\epsilon = 1$)}}}{=}}
\theoremstyle{plain}
\newtheorem{lemma}{Lemma}
\def\therule{\makebox[\algorithmicindent][l]{\hspace*{.5em}\vrule height .75\baselineskip depth .25\baselineskip}}%
\newtoks\therules
\def\appendto#1#2{\expandafter#1\expandafter{\the#1#2}}
\def\gobblefirst#1{
	#1\expandafter\expandafter\expandafter{\expandafter\@gobble\the#1}}%
\def\LState{\State\unskip\the\therules}
\def\pushindent{\appendto\therules\therule}%
\def\popindent{\gobblefirst\therules}%
\def\printindent{\unskip\the\therules}%
\def\printandpush{\printindent\pushindent}%
\def\popandprint{\popindent\printindent}%
\algrenewcommand\algorithmicprocedure{\textbf{Input}}
\algrenewcommand\algorithmicreturn{\textbf{Output:}}
\begin{document}
\bstctlcite{IEEEexample:BSTcontrol}

\title{A Spatiotemporal Framework for Information Freshness in IoT Uplink Networks}
	
	\author{\IEEEauthorblockN{Mustafa Emara\IEEEauthorrefmark{1}\IEEEauthorrefmark{2}, Hesham ElSawy\IEEEauthorrefmark{3}, Gerhard Bauch\IEEEauthorrefmark{2}} \\
		\IEEEauthorblockA{\IEEEauthorrefmark{1} Next Generation and Standards, Intel Deutschland GmbH, Neubiberg, Germany} \\
		\IEEEauthorblockA{\IEEEauthorrefmark{2} Institute of Communications, Hamburg University of Technology, Hamburg, Germany} \\
		\IEEEauthorblockA{\IEEEauthorrefmark{3} Electrical Engineering Department, King Fahd University of Petroleum and Minerals, Dhahran, Saudi Arabia} \\
		Email:\{mustafa.emara@intel.com,  hesham.elsawy@kfupm.edu.sa, bauch@tuhh.de\}
	}
	\maketitle
	\thispagestyle{empty}
	\maketitle
	\thispagestyle{empty}
	
\newacronym{AoI}{AoI}{age of information}
\newacronym{BS}{BS}{base station}
\newacronym{DTMC}{DTMC}{discrete time Markov chain}
\newacronym{EA}{EA}{equal allocation}
\newacronym{5G}{5G}{fifth generation}
\newacronym{FCFS}{FCFS}{first come first serve}
\newacronym{IoT}{IoT}{Internet of Things}
\newacronym{KPI}{KPI}{key performance indicator}
\newacronym{LT}{LT}{Laplace transform}
\newacronym{LTE}{LTE}{long term evolution}
\newacronym{MAM}{MAM}{matrix analytic method}
\newacronym{MTC}{MTC}{machine type communication}
\newacronym{MAC}{MAC}{medium access control}

\newacronym{NB-IoT}{NB-IoT}{narrowband IoT}
\newacronym{PPP}{PPP}{Poisson point processes}
\newacronym{PDF}{PDF}{probability density function}
\newacronym{PMT}{PMT}{prioritized multi-stream traffic}
\newacronym{QoS}{QoS}{quality of service}
\newacronym{QCI}{QCI}{QoS class identifier}
\newacronym{QBD}{QBD}{quasi-birth-death}
\newacronym{RAT}{RAT}{radio access technology}
\newacronym{SINR}{SINR}{signal to interference noise ratio}
\newacronym{SIR}{SIR}{signal to interference ratio}
\newacronym{3GPP}{3GPP}{Third generation partnership project}
\newacronym{TSN}{TSN}{time senstive networking}
\newacronym{TSP}{TSP}{transmission success probability}
\newacronym{URLLC}{URLLC}{ultra reliable low latency communication}
\newacronym{WA}{WA}{weighted allocation}
	\thispagestyle{empty}
\begin{abstract}
Timely message delivery is a key enabler for Internet of Things (IoT) and cyber-physical systems to support wide range of context-dependent applications. Conventional time-related metrics, such as delay, fails to characterize the timeliness of the system update or to capture the freshness of information from application perspective. Age of information (AoI) is a time-evolving measure of information freshness that has received considerable attention during the past years. In the foreseen large-scale and dense IoT networks, joint temporal (i.e., queue aware) and spatial (i.e., mutual interference aware) characterization of the AoI is required. In this work we provide a spatiotemporal framework that captures the peak AoI for large scale IoT uplink network. To this end, the paper quantifies the peak AoI for large-scale cellular network with Bernoulli uplink traffic. Simulation results are conducted to validate the proposed model and show the effect of traffic load and decoding threshold. Insights are driven to characterize the network stability frontiers and the location-dependent performance within the network.
\end{abstract}
\begin{IEEEkeywords}
Age of information, spatiotemporal models, Internet of Things,  queueing theory, stochastic geometry
\end{IEEEkeywords}
%

\section{Introduction}\label{section:introduction}
The timeliness and retainability of continuous updates of nodes within a system are overarching requirements among different technology segments, such as vehicular, industrial \ac{IoT} and cellular \cite{3GPP2018, Kim2012}. This implies continuous information update about the real-time state between a given VRU and its targeted cluster of vehicles \cite{NGMNA2016}. The proposed \ac{AoI} metric in \cite{Kaul2012} characterizes the freshness of information at the receiver and has received increasing attention in the past years. The age at a given time stamp (i.e., observation point) is defined as the current time stamp minus the time at which the observed state (or packet) was generated \cite{Yates2019}. Compared to traditional time metrics (e.g. delay and jitter), \ac{AoI} assists in achieving timely updates in a way those traditional metrics do not \cite{AoIDhillon, Ceran2018}. 

The traffic generated by the existing \ac{IoT} devices highly impact the \ac{AoI} and the overall network performance. \ac{IoT} traffic can be categorized into time and event triggered \cite{Metzger2019}. Time-triggered events generate periodic traffic as in vehicular communications, smart grids and wireless sensor networks \cite{Palattella2016}. In such segments, a central entity collects status updates from multiple nodes (e.g., sensors, vehicles and monitors) through wireless channels. A critical challenge is how to maintain timely status updates over all the connected nodes \cite{Jiang2019}. To this end, characterizing the \ac{AoI} leads to informed designs to enhance the performance of time-critical applications. Moreover, event triggered traffic are dependent on external events or critical information reporting. In this paper, we consider Bernoulli-based traffic model that mimic the aforementioned scenarios to analyze the peak \ac{AoI} in uplink \ac{IoT} networks. 
 
Based on the foreseeable large number of deployed devices \cite{Ayoub2018}, interference might hinder timely updates of a given link of interest. In order to characterize the network-wide mutual interference, stochastic geometry is a prominent framework that can be leveraged \cite{Andrews2011, Elsawy_tutorial, Haenggi2012}. However, considerations of the temporal evolution was not considered, as the full-buffer assumption at the transmitter side is widely adopted. To account for the temporal domain, recent efforts have integrated queueing theory with stochastic geometry, which offers a full spatiotemporal characterization for large-scale networks \cite{Zhong2017, Gharbieh2018, Yang2019, Chisci2019, YangAoI2019}. Following such a spatiotemporal analysis of the network, \ac{AoI} was studied under a stochastic geometry framework in \cite{Hu2018}, where lower and upper bounds for the average \ac{AoI} were derived. Moreover, \ac{AoI} under a spatiotemporal framework has recently been investigated in \cite{YangAoI2019}, where the authors investigated different scheduling techniques to optimize the peak \ac{AoI} under a spatiotemporal framework. To this end, large scale spatiotemporal IoT analytical framework for peak \ac{AoI} characterization under different traffic loads is still an open research problem.

Throughout this work, we provide an analytical framework to characterize peak \ac{AoI} that entails macroscopic and microscopic scales of uplink large scale \ac{IoT} networks. Peak \ac{AoI} is considered throughout this work due to its peak-characterization of the \ac{AoI} compared to the average \ac{AoI}, which fails to capture the underlying variance \cite{Huang2015, YangAoI2019}. For the macroscopic aspect, tools from stochastic geometry are adopted to account for the mutual interference among active devices (i.e., position dependent). Tools from queueing theory are employed to characterize the microscopic queue evolution at each device. To track the queue status at a given time stamp, a \ac{DTMC} is employed for each device. Expressions for the distribution of the coverage probability (i.e., meta distribution) are derived, which entails the effect of the traffic arrival. In addition, temporal distribution and the peak \ac{AoI} are presented. To the best knowledge of the authors, this paper presents the first spatiotemporal framework that characterizes peak \ac{AoI} within an uplink \ac{IoT} network.

Throughout this paper, we adopt the following notation. Matrices and vectors are represented as upper-case and lower-case boldface letters ($\mathbf{A}$, $\mathbf{a}$), respectively. The indicator function is denoted as $\mathbbm{1}_{\{a\}}$ which equals 1 if the expression $a$ is true and 0 otherwise. In addition, $\mathbf{1}_m$ and $\mathbf{\mathcal{I}}_m$ denote, respectively, an all ones vector and matrix of dimension $m \times m$. In addition, identity matrix of dimension $m$ is represented via $\mathbf{I}_{m}$. The complement operator is denoted by the over-bar (i.e., $\bar{v} = 1-v$).  The notations $\mathbb{P}\{\cdot\}$ and $\mathbb{E}\{\cdot\}$ denote the probability of an event and its expectation. 

The rest of the paper is organized as follows. Section \ref{sec:system_model} provides the system model, the underlying physical and \ac{MAC} assumptions, and the peak peak \ac{AoI} characterization. Section \ref{sec:sg_analysis} shows the macroscopic inter-device queueing interactions in terms of mutual interference. The proposed queueing model along with the microscopic intra-device interactions among the queues are presented in Section \ref{sec:QT_anaylsis}. Simulation results are presented in Section \ref{sec:simulation_results} and Section \ref{sec:Conclusion} summarizes the work.
	\thispagestyle{empty}
\section{System Model}\label{sec:system_model}
\subsection{PHY layer parameters}
Throughout this paper, the \acp{BS} are spatially distributed according to $\mathrm{\Phi}$ with spatial intensity $\lambda$ BS/km$^2$. The \ac{IoT} devices point process $\mathrm{\Psi}$ is constructed such that  within the Voronoi cell of every BS $b_i\in\mathrm{\Phi}$, a device is dropped uniformly and independently. Single antennas are employed at all devices and \acp{BS}. An unbounded path-loss propagation model is adopted such that the signal power attenuates at the rate $r^{-\eta}$, where $r$ is the distance and $\eta > 2$ is the path-loss exponent. Small-scale fading is assumed to be multi-path Rayleigh fading, where the intended and interference channel power gains $h$ and $g$, respectively, are exponentially distributed with unity power gain. All channel gains are assumed to be spatially and temporally independent and identical distributed (i.i.d.). Fractional path-loss inversion power control is considered at the devices with compensation factor $\epsilon$. Accordingly, the transmit power of an UE located $r$ meters away from its serving BS is given by $\rho r^{\eta\epsilon}$, where $\rho$ is a power control parameter to adjust the average received power at the BS \cite{ElSawy2014}. 

\subsection{MAC layer parameters}
The proposed framework considers a synchronized and time slotted system, in which a new packet is generated at a generic device based on an i.i.d. Bernoulli traffic generation model, with per-slot inter-packet arrival probability of $\alpha\in(0,1]$. A first-come first-serve queue is considered at each device, where failed packets are persistently retransmitted till successful reception. A packet is successfully decoded at its serving \ac{BS} if the received \ac{SIR} is larger than a predefined threshold $\theta$. In case of successful decoding, that serving \ac{BS} transmits an ACK through an error-free channel so the device can remove this intended packet from its respective queue. In case of failed decoding, the serving \ac{BS} sends out an NACK and the packet remains at the head of the device's queue, awaiting a new transmission attempt in the next time slot.
\begin{figure}
	\begin{center}
		\input{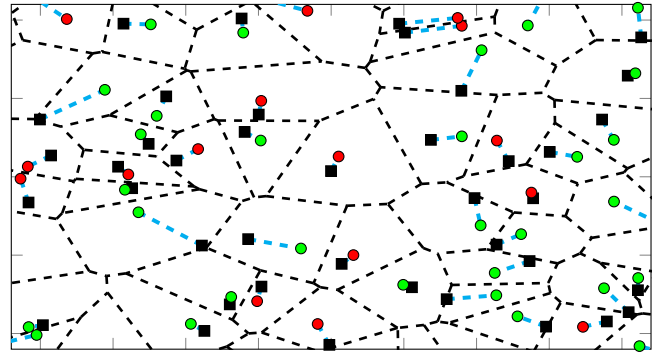}		
		\caption{A network realization for $\theta = 1$ and $\alpha=0.25$ packets/slot. Black squares depict the BSs while green and red circles represent idle and active IoT devices, respectively. The Voronoi cells of the BSs are denoted by the dashed black lines while the dashed cyan lines denote the associations of the devices to their serving BSs.}
		\label{fig:spatiotemporal_realization}
	\end{center}
\end{figure}
In Fig. \ref{fig:spatiotemporal_realization}, a snapshot realization of the network is shown. At a given time slot two different panoramas of devices can be observed; i) non-active devices (i.e., devices with empty queues) ii) devices with packets in their queues due to either a packet arrival event or failed transmission attempts of  backlogged packets\footnote{To analyze the location-dependent performance of the network, we consider a static network where for a generic network realization, $\mathrm{\Phi}$ remains static over sufficiently large number of time slots, while channel fading, queue states, and device activities change from one time slot to another.}.
\subsection{Age of Information}\label{sec:AoI}
As mentioned in Section \ref{section:introduction}, \ac{AoI} quantifies the freshness (i.e., timeliness) of information transmitted by the devices within the network \cite{Kaul2012}. For the considered time slotted system and a typical link, the \ac{AoI} $\Delta_o(t)$, tracks the \ac{AoI} evolution with time as shown in Fig. \ref{fig:AoI}. Assume that the ith packet is generated at time $Y_o(t)$, then $\Delta_o(t+1)$ is computed recursively as
\begin{equation}
\Delta_o(t+1)=\left\{\begin{array}{ll}{\Delta_o(t)+1,} & {\text { transmission failure, }} \\ {t-Y_o(t)+1,} & {\text { otherwise }}\end{array}\right.
\end{equation}
Through this paper, we consider the peak \ac{AoI} which is defined as the value of age achieved immediately before receiving the $i$-th update \cite{Huang2015}. To this end, conditioned on a fixed, yet generic spatial realization, the peak \ac{AoI}, as observed from Fig. \ref{fig:AoI}, is computed as
\begin{equation}\label{eq:peakAoI}
\mathbb{E}\{\Delta^p|  \mathrm{\Phi}\} = \mathbb{E}^{!}\Big\{\mathcal{I}_o + \mathcal{W}_o |  \mathrm{\Phi} \Big\},
\end{equation}
where $\mathbb{E}^{!}\{.\}$ is the reduced Palm expectation \cite{Haenggi2012}, $\mathcal{I}_o$ and $\mathcal{W}_o$ denote the inter-arrival time between consecutive packets and the waiting of a generic packet time in the queue, respectively. In order to characterize the peak \ac{AoI}, one must compute the  waiting time of a generic packet within the queue, which depends on, among other parameters, the traffic model, queue distribution and network-wide aggregate interference. In the remaining of this paper, we will provide a spatiotemporal framework that characterizes the peak \ac{AoI}.
\begin{figure}
	\begin{center}
		\input{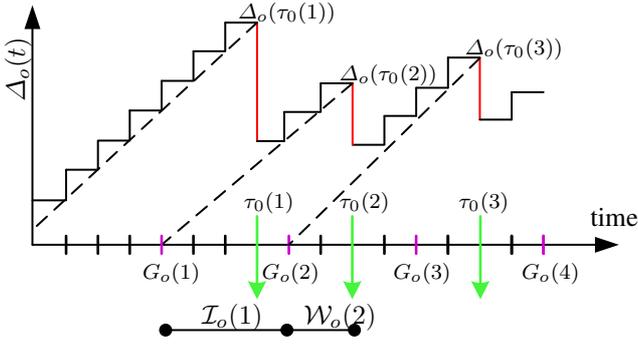}		
		\caption{AoI evolution of a typical link. The time stamps $G_o(n)$ and $\tau_o(n)$ denote the time at which the $n$-th packet was generated and successfully delivered. $\mathcal{I}_o(1)$ and $\mathcal{W}_o(2)$ denote the inter-arrival time and the waiting times.}
		\label{fig:AoI}
	\end{center}
\end{figure}
	\thispagestyle{empty}
	\section{Macroscopic Large Scale Analysis}\label{sec:sg_analysis}
Through this section, the network-wide aggregate interference will be characterized. Focusing on a fixed arbitrary spatial realization of $\mathrm{\Phi}$ and $\mathrm{\Psi}$, let $u_o \in \mathrm{\Phi}$ and $b_o = \text{argmin}_{b_o\in \mathrm{\Psi}} ||u_o-b_o||$ be a randomly selected UE and its serving BS, respectively. The transmission success probability is defined as 

\begin{align}\label{eq:PS}
P_s(\theta) &= \mathbb{P}^{ !}\left\{\frac{P_{o} h_{o}\left\|u_{o}-b_{o}\right\|^{-\eta}}{\sum_{u_{i} \in \mathrm{\Phi} \backslash u_{0}} a_{i} P_{i} g_{i}\left\|u_{i}-b_{o}\right\|^{-\eta}}>\theta | \mathrm{\Phi}, \mathrm{\Psi} \right\}, \nonumber \\
&\myeqa \prod_{r_i\in \Phi_o}  \mathbb{E}^{!} \bigg[ \bigg(\frac{1}{1 + \frac{a_{i}\theta P_ir_o^{\eta(1-\epsilon)}}{\rho r_i^{\eta}}} \bigg)\bigg| \mathrm{\Phi}, \mathrm{\Psi} \Big], \nonumber \\
&\myeqb\prod_{r_i \in \Phi_o} \Big(\frac{\bar{\chi}}{1 + \frac{\theta P_ir_o^{\eta(1-\epsilon)}}{\rho r_i^{\eta}}} + {\chi} \Big)
\end{align}
where $\mathbb{P}^{ !} \{\cdot\}$ is the reduced Palm probability, $P_o$ is the uplink transmit power of the typical device, $r_o=||u_o-b_o||$, $\mathrm{\Phi}_o=\{||b_o- \{\mathrm{\Phi}\backslash u_o\}||\}$ represents the set containing distances between the interfering devices and \ac{BS} of interest. In addition, $a_i$ and $P_i$ denote the $i$-th interference device activity profile and its uplink transmit power, respectively. The step $(a)$ follows from the exponential distribution of the channel gains (i.e., $h_o$ and $h_i$). Furthermore, the step $b$ follows as $a_i \sim \text{Bernoulli}(\alpha)$. Finally, $\bar{\chi}$ is the spatially averaged idle probability, which is dependent on the queue distribution.

Spatially, the transmission success probabilities are random variables that are location dependent. The meta distribution characterizes the transmission success probabilities variation across the network \cite{Haenggi_meta, Haenggi_meta2} as 
\begin{equation}
\bar{F}(\theta, \delta)=\mathbb{P}^{!}\{P_{s}(\theta)>\delta | \mathrm{\Phi}, \mathrm{\Psi}\},
\end{equation}
where $\delta$ is the percentile of devices within the network that can achieve $P_s(\theta)$. For uplink transmission with fractional power control, the conditional distribution of the transmission success probabilities, where the beta approximation was utilized \cite{ElSawy_meta}, \cite{Haenggi_meta} to approximate the meta distribution $f_{P_s}(\omega)$ as
\begin{align}\label{eq:metaPDF}
	F(\theta, \alpha) \approx I_{\alpha}\left(\frac{M_{1}\left(M_{1}-M_{2}\right)}{\left(M_{2}-M_{1}^{2}\right)}, \frac{\left(1-M_{1}\right)\left(M_{1}-M_{2}\right)}{\left(M_{2}-M_{1}^{2}\right)}\right), 
\end{align}
where $I_{\alpha}(a, b)=\int_{0}^{\alpha} t^{a-1}(1-t)^{b-1} \mathrm{d} t$ is the regularized
incomplete beta function, $M_1$ and $M_2$ are the first and second moments of $P_s$. The approximations of $P_s$ moments are given by the following lemma.
\begin{lemma}\label{lem:meta_lemma}
	The moments of the transmission success probabilities in an uplink network with steady state queue idle probability $\chi$, \ac{SIR} threshold $\theta$, path-loss exponent $\eta$, and fractional path-loss inversion compensation factor $\epsilon$, are approximated by $\tilde{M}_b$ given in eq.(\ref{eq:lemma1}), where $\chi$ is the residual interference intensity seen by the link and $\gamma(a,y) = \int_{0}^{b} t^{a-1}\text{e}^{-t}dt$ is the lower incomplete gamma function. 
\end{lemma}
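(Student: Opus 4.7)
The plan is to compute $M_b = \mathbb{E}^{!}\{P_s(\theta)^b\}$ directly from the product form in \eqref{eq:PS}. Raising that expression to the $b$-th power and taking the Palm expectation over the interfering point process, I would invoke the standard stochastic-geometry approximation that the interferers inherit the PPP structure of $\mathrm{\Phi}$ with intensity $\lambda$. The PGFL of the PPP then replaces the (random) product by an exponential of a radial integral, yielding, conditional on the link distance $r_o$,
\begin{equation*}
M_b \mid r_o \approx \exp\!\left(-2\pi\lambda \int_0^\infty \left(1-\mathbb{E}_{P_i}\!\left[\left(\frac{\bar\chi}{1+\theta P_i r_o^{\eta(1-\epsilon)}/(\rho r_i^\eta)}+\chi\right)^{b}\right]\right) r_i\,dr_i\right).
\end{equation*}

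Next I would handle the inner power-of-rational-function. Since the beta fit in \eqref{eq:metaPDF} only requires $M_1$ and $M_2$, a binomial expansion $(\chi+\bar\chi/(1+x))^b=\sum_{k=0}^{b}\binom{b}{k}\chi^{b-k}\bar\chi^{k}(1+x)^{-k}$ reduces the task to evaluating moments of $(1+x)^{-k}$ for $k\in\{0,1,\ldots,b\}$. Substituting the fractional power control $P_i=\rho r_i'^{\eta\epsilon}$, averaging over $r_i'$ via the widely used Rayleigh approximation of the BS-to-UE distance, and applying a change of variables of the form $u\propto r_i^\eta r_i'^{-\eta\epsilon}r_o^{-\eta(1-\epsilon)}$ converts the remaining double integral into a form involving $\gamma(a,y)$. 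Finally I would decondition on $r_o$ using its Rayleigh PDF $f_{r_o}(r)=2\pi\lambda r e^{-\pi\lambda r^2}$ to arrive at the closed-form approximation $\tilde{M}_b$ stated in \eqref{eq:lemma1}.

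The main obstacle is the coupling between the interferer's location relative to the tagged BS (distance $r_i$) and its transmit power, which is set by $r_i'$, the distance to its own serving BS. These distances are not independent — notably $r_i'\le r_i$ under nearest-BS association — and this constraint is exactly what truncates the integrand and converts a complete gamma function into the lower incomplete gamma $\gamma(a,y)$. Treating this dependence approximately while keeping the final expression in closed form is the delicate step; the standard resolution is to model $r_i'$ as independently Rayleigh distributed and then restore the truncation through the integration limits before invoking the incomplete-gamma identity. A secondary subtlety is that the effective interferer activity is not $\alpha$ but the steady-state busy probability $\bar\chi$ induced by the queueing dynamics, which is why the formula in \eqref{eq:lemma1} must be expressed in terms of $\chi$ rather than $\alpha$ and why the lemma is ultimately self-consistent only once coupled with the microscopic analysis of Section \ref{sec:QT_anaylsis}.
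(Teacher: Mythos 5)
Your outline follows essentially the same route as the paper's proof, which itself is little more than a pointer to \cite[Theorem 1]{ElSawy_meta}: ignore the Voronoi-induced correlation of the transmit powers, approximate the interfering devices by a Poisson process, thin it by the busy probability $\bar{\chi}$ (not $\alpha$), apply the probability generating functional together with the mapping/displacement theorems, and decondition on $r_o$. Your identification of the $r_i' \le r_i$ coupling as the source of the lower incomplete gamma, and of $\bar{\chi}$ as the correct activity mark, matches the paper's reasoning and is in fact spelled out in more detail than the appendix provides.

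The one point where you deviate is the intensity of the approximating point process: you take the interferers to be a \emph{homogeneous} PPP of intensity $\lambda$, whereas the paper (following \cite{ElSawy_meta}) uses the \emph{inhomogeneous} intensity $\lambda(x)=\lambda\bigl(1-e^{-\pi\lambda\|x\|^{2}}\bigr)$, which encodes the depletion of interferers near the tagged BS caused by the one-device-per-Voronoi-cell construction. With a homogeneous intensity your PGFL integral would not reduce to the exact form of eq.~(\ref{eq:lemma1}) (in particular the interplay between the outer $e^{-z}$ factor and the argument $z y^{2/(\eta(1-\epsilon))}$ of the incomplete gamma), and it would overestimate the near-field interference. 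Replacing $\lambda$ by $\lambda(x)$ before invoking the PGFL closes this gap; the rest of your argument then goes through as in the cited theorem.
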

\begin{proof}
	See Appendix \ref{se:Appendix_A}
\end{proof}
\begin{figure*}[!t]
	\normalsize
	\begin{align}\label{eq:lemma1}
\tilde{M}_{b}&= \int_{0}^{\infty} \exp \Bigg\{-z- \frac{2z^{1-\epsilon}}{\eta } \int_{\mathbbm{1}\{\epsilon = 1\}}^{\infty} y^{\frac{2}{\eta}-1}\big( 1-\Big(\frac{y+\theta\chi}{y + \theta}\Big)^b\big) \gamma\Big(1+\epsilon, zy^{\frac{2}{\eta(1-\epsilon)}}\Big) dy\Big)\Bigg\} dz, \\ 
&\myeqc \; \exp \Bigg\{-\frac{2}{\eta} \sum_{n=1}^{b}\left(\begin{array}{c}{b} \\ {n}\end{array}\right) \frac{(-1)^{n+1} (\bar{\chi}\theta)^{n}}{n-\frac{2}{\eta}}2 F_{1}\left(n, n-\frac{2}{\eta}, n+1-\frac{2}{\eta},-\theta\right) \Bigg\}. \nonumber
	\end{align}
	\hrulefill
\end{figure*}
It is observed from Lemma \ref{lem:meta_lemma} that the macroscopic network-wide aggregate characterization of the network depends on the  queues evolution, via $\chi$. In order to capture such interdependency, we first resort to the discretization of the meta distribution. Categorizing the devices within the network into classes is not feasible due to the continuous support of $P_s(\theta) \in(0,1)$, which will lead to infinite number of classes. For practicality, the distribution in (\ref{eq:metaPDF}) is quantized into $N$ \ac{QoS} classes based on the importance criteria \cite{Chisci2019}. The network categorization process of the distribution in (\ref{eq:metaPDF}) for the $n$-th class is based first on computing $\omega_n$ as 
\begin{equation}\label{eq:pmf1}
F_{P_s}(\omega_n)-F_{P_s}(\omega_{n+1})=\int_{\omega_n}^{\omega_{n+1}} f_{P_s}(\omega) d\omega=\frac{1}{N},
\end{equation}
where $n\in[1,2,\cdots N]$. Afterwards, the discrete probability mass function $d_n$ (i.e., $\mathbb{P}\{P_s = d_n\} = \frac{1}{N}$) can be evaluated using Bisection method as 
\begin{equation}\label{eq:pmf2}
\int_{\omega_n}^{d_n} f_{P_s}(\omega)d\omega = \int_{d_n}^{\omega_{n+1}} f_{P_s}(\omega)d\omega.
\end{equation} 
The computation of $d_n,\; \forall n=[1,2,\cdots,N]$ via eq. (\ref{eq:pmf1}) and (\ref{eq:pmf2}) quantizes the meta distribution into $N$ equiprobable classes as shown in Fig. \ref{fig:quant_meta}. The queue's service rate of a device belonging to the $n$-class is determined by  $d_n$. Finally, in order to fully evaluate $d_n$, one needs first to compute $\chi$ which is dependent on the queue characteristics as presented in the next section. 

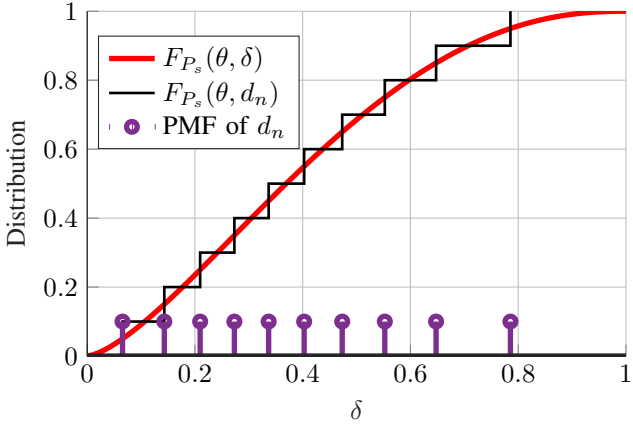
\begin{figure} 
	\centering
	\definecolor{mycolor1}{rgb}{0.49020,0.18039,0.56078}%

\begin{tikzpicture}

\begin{axis}[%
width=0.8\columnwidth,
height=1.8in,
scale only axis,
xmin=0,
xmax=1,
xlabel style={font=\color{white!15!black}},
xlabel={$\delta$},
ymin=0,
ymax=1,
ytick={0,0.2,0.4,0.6,0.8,1},
ylabel style={font=\color{white!15!black}},
ylabel={Distribution},
axis x line*=bottom,
axis y line*=left,
xmajorgrids,
ymajorgrids,
legend style={at={(0.02,0.6)}, anchor=south west, legend cell align=left, align=left, draw=white!15!black}
]
\addplot [color=red, line width=2.0pt]
table[row sep=crcr]{%
	0	0\\
	0.01	0.00316147369524374\\
	0.02	0.00882172754478877\\
	0.03	0.0160243637660381\\
	0.04	0.0244154579543246\\
	0.05	0.0337837571741215\\
	0.06	0.043982309739499\\
	0.07	0.0549000574866723\\
	0.08	0.0664485383584482\\
	0.09	0.0785546785910195\\
	0.1	0.0911564860060308\\
	0.11	0.104200290013568\\
	0.12	0.117638876499696\\
	0.13	0.131430173837579\\
	0.14	0.145536295509061\\
	0.15	0.159922822930666\\
	0.16	0.174558255558411\\
	0.17	0.18941358080871\\
	0.18	0.204461931890115\\
	0.19	0.219678311495064\\
	0.2	0.235039365739724\\
	0.21	0.250523197062651\\
	0.22	0.266109207764718\\
	0.23	0.28177796795935\\
	0.24	0.297511103195023\\
	0.25	0.313291198098395\\
	0.26	0.329101713189171\\
	0.27	0.344926912619295\\
	0.28	0.360751801045421\\
	0.29	0.376562068193998\\
	0.3	0.39234403995015\\
	0.31	0.40808463501454\\
	0.32	0.423771326340905\\
	0.33	0.439392106701285\\
	0.34	0.454935457833978\\
	0.35	0.470390322716766\\
	0.36	0.485746080579267\\
	0.37	0.500992524326858\\
	0.38	0.516119840096992\\
	0.39	0.531118588708916\\
	0.4	0.54597968880141\\
	0.41	0.560694401481357\\
	0.42	0.575254316329802\\
	0.43	0.589651338632364\\
	0.44	0.603877677718117\\
	0.45	0.617925836305845\\
	0.46	0.631788600769273\\
	0.47	0.645459032243929\\
	0.48	0.65893045850784\\
	0.49	0.672196466576648\\
	0.5	0.685250895961113\\
	0.51	0.698087832541443\\
	0.52	0.710701603018723\\
	0.53	0.723086769908891\\
	0.54	0.73523812704939\\
	0.55	0.74715069559294\\
	0.56	0.758819720466763\\
	0.57	0.770240667279302\\
	0.58	0.781409219659894\\
	0.59	0.792321277020183\\
	0.6	0.802972952729238\\
	0.61	0.813360572697503\\
	0.62	0.82348067436786\\
	0.63	0.833330006115298\\
	0.64	0.84290552706002\\
	0.65	0.852204407302355\\
	0.66	0.861224028591577\\
	0.67	0.869961985444899\\
	0.68	0.878416086737417\\
	0.69	0.886584357788906\\
	0.7	0.894465042979161\\
	0.71	0.902056608930233\\
	0.72	0.909357748301663\\
	0.73	0.91636738425389\\
	0.74	0.923084675645824\\
	0.75	0.929509023045427\\
	0.76	0.9356400756478\\
	0.77	0.941477739214249\\
	0.78	0.947022185169238\\
	0.79	0.952273861021269\\
	0.8	0.957233502310285\\
	0.81	0.961902146330657\\
	0.82	0.96628114793847\\
	0.83	0.970372197829431\\
	0.84	0.974177343776026\\
	0.85	0.977699015449496\\
	0.86	0.980940053638466\\
	0.87	0.98390374493423\\
	0.88	0.986593863317886\\
	0.89	0.989014720613508\\
	0.9	0.99117122855845\\
	0.91	0.993068976449822\\
	0.92	0.994714330249099\\
	0.93	0.996114562224595\\
	0.94	0.997278025818323\\
	0.95	0.998214400924384\\
	0.96	0.998935056169969\\
	0.97	0.999453623675835\\
	0.98	0.999787013265266\\
	0.99	0.999957554480932\\
	1	1\\
};
\addlegendentry{$F_{P_s}(\theta, \delta)$}

\addplot [color=black, line width=1.0pt]
table[row sep=crcr]{%
	-inf	0\\
	0.0655875205993652	0\\
	0.0655875205993652	0.1\\
	0.14312219619751	0.1\\
	0.14312219619751	0.2\\
	0.209662914276123	0.2\\
	0.209662914276123	0.3\\
	0.273205280303955	0.3\\
	0.273205280303955	0.4\\
	0.336818218231201	0.4\\
	0.336818218231201	0.5\\
	0.402720928192139	0.5\\
	0.402720928192139	0.6\\
	0.473352909088135	0.6\\
	0.473352909088135	0.7\\
	0.552421092987061	0.7\\
	0.552421092987061	0.8\\
	0.647600650787354	0.8\\
	0.647600650787354	0.9\\
	0.785600185394287	0.9\\
	0.785600185394287	1\\
	inf	1\\
};
\addlegendentry{$F_{P_s}(\theta, d_n)$}

\addplot[ycomb, color=mycolor1, line width=2.0pt, mark=o, mark options={solid, mycolor1}] table[row sep=crcr] {%
	0.0655875205993652	0.1\\
	0.14312219619751	0.1\\
	0.209662914276123	0.1\\
	0.273205280303955	0.1\\
	0.336818218231201	0.1\\
	0.402720928192139	0.1\\
	0.473352909088135	0.1\\
	0.552421092987061	0.1\\
	0.647600650787354	0.1\\
	0.785600185394287	0.1\\
};
\addplot[forget plot, color=white!15!black, line width=2.0pt] table[row sep=crcr] {%
	0	0\\
	1	0\\
};
\addlegendentry{$\text{PMF of }d_n$}

\end{axis}
\end{tikzpicture}%
	\caption{Quantized meta distribution for $N=10$, hypothetical $\chi = 0.5$, $\theta = 5$ dB, $\eta = 4$ and $\epsilon=1$.}
	\label{fig:quant_meta} 
\end{figure} 
	\thispagestyle{empty}
\section{microscopic queueing theory analysis}\label{sec:QT_anaylsis}
The mathematical model for the microscopic scale (i.e., queue evolution) at the devices will be presented in this section. In the proposed framework, the device's location-dependency is captured via its departure probability (i.e. QoS class dependent), which is static over a sufficiently large number of time slots. However , the departure probability for each queue is still random and independent from one time slot to another due to the independent random variations of the channel gains and activity profiles of the interfering devices. Consequently,  a Geo/Geo/1 queueing model is adopted to track the queue evolution at each device, where Geo stands for geometric inter-arrival and departure processes. It is important to note that the geometric departure is an approximation that capitalizes on the negligible temporal correlation of the departure probabilities once the location-dependent QoS class is determined \cite{Chisci2019}. To this end, the queue transitions for a device within the $n$-th class are captured through the following probability transition matrix
\begin{equation}\label{eq:QBD}
\mathbf{P}_n=\left[\begin{array}{lllll}{\bar{\alpha}} & {\alpha} & {} & {} & {} \\ {\bar{\alpha}d_n} & {\alpha d_n + \bar{\alpha}\bar{d}_n} & {\alpha \bar{d}_n} & {} & {} \\ {} & {\bar{\alpha}d_n} & {\alpha d_n + \bar{\alpha}\bar{d}_n} & {\alpha \bar{d}_n} & {} \\ {} & {} & {\ddots} & {\ddots} & {\ddots}\end{array}\right].
\end{equation}
To analyze the queue's stability, one is interested to determine the critical arrival rate after which the probability of having unstable queues starts to dominate and the queue's idle probability is zero  \cite{Loynes1962}. Mathematically, for the \ac{DTMC} in (\ref{eq:QBD}) to be stable, the equality $\frac{\alpha}{d_n} < 1$ must be satisfied. For unstable \acp{DTMC}, the idle probability is naturally 0. To this end, let $\mathbf{x}_n = [x_{0,n} \; x_{1,n} \; x_{2,n} \; \cdots]$ be the steady state probability vector of the $n$-th class, where $x_{i,n}$ is the probability that a device belonging to the $n$-th class has $i$ packets. The idle probability of device in the $j-th$ class is evaluated as \cite{Alfa2015}
\begin{equation}\label{eq:x_0}
x_{i,n} = R_n^i\frac{x_{0,n}}{\bar{d}_n}, \;\; \text{where } R_n = \frac{\alpha \bar{d}_n}{\bar{\alpha}d_n}, \;\; \text{and } x_{0,n} = \frac{d_n - \alpha}{d_n}.
\end{equation}
Resorting to the mean field theory, the spatially averaged idle probability $\chi$ that is required to evaluate $F(\theta, \delta)$, is computed by averaging over the $N$ classes temporal idle probabilities as 
\begin{equation}\label{eq:res}
\chi = \frac{1}{N}\sum_{n=1}^{N} x_{0,n}.
\end{equation}
Once the queue distribution is characterized, one can proceed with evaluating the temporal distribution of a generic packet within the considered Geo/Geo/1 queue, which is the major component in computing the peak \ac{AoI} as explained in Section \ref{sec:AoI}. 
As mentioned earlier, let $\mathcal{W}_n$ be the waiting time in the queue for a packet at a device belonging to the $n$-th class and $\mathcal{W}_o^m = \mathbb{P}\{\mathcal{W}_o = m\}$. The waiting time for the $n$-th class is 
\begin{equation}\label{eq:waiting}
\mathcal{W}_{n}^m= \begin{cases}
\frac{d_n -\alpha}{d_n},  &  m = 0 \\ 
\sum_{v=1}^{i} x_{v,n}\left(\begin{array}{c}{i-1} \\ {v-1}\end{array}\right) d_n^{v}(1-d_n)^{i-v} &  m \ge 1.
\end{cases}
\end{equation}
Finally, by plugging eq.(\ref{eq:waiting}) in eq.(\ref{eq:peakAoI}), the network-wide peak \ac{AoI} of a device within the $n$-th class is computed as 
\begin{equation}\label{eq:AoI_wait}
\mathbb{E}\{\Delta^p|  \mathrm{\Phi}\} = \frac{1}{\alpha} + \frac{1}{N}\Big(\sum_{\varrho=1}^{N}\sum_{k=0}^{\infty} \mathcal{W}_\varrho^k k\Big). 
\end{equation}
It is clear that to evaluate the peak \ac{AoI}, the queue distribution is required, which in turn is dependent on the network-wide aggregate interference, via $d_n$. To solve such interdependency, Algorithm 1 is presented which converges uniquely to a solution by virtue of fixed point theorem \cite{Zhou2016}. 
\begin{figure}
	\begin{algorithm}[H]\label{alg:iterative}
		\caption{Iterative computation of ${F}(\theta, \delta)$ and $\chi$}\label{euclid}
		\begin{algorithmic}
			\Procedure{}{$\alpha, \eta, \epsilon, \theta, N, \varphi$} 
			\LState initialize  $\chi$ 
			\While {$||\chi^k - \chi^{k-1}|| \geq \varphi$}  
			\LState Compute the moments $\tilde{M}_{b}$ from Lemma \ref{lem:meta_lemma}				 
			\LState Evaluate $F(\theta, \alpha)$ based on (\ref{eq:metaPDF})
			\LState Compute $d_i,\forall i=[1\cdots N]$ from the Discretized 
			\LState $F(\theta, \alpha)$ based on (\ref{eq:pmf1}) and (\ref{eq:pmf2}) 
			\For {$n=[1,2,\cdots,N]$} 
			\If {$\alpha < d_n$}
			\LState Compute $x_{0,n}$ based on (\ref{eq:x_0})
			\Else
			\LState Set $x_{(0,n)} = 0$
			\EndIf
			\EndFor
			\LState Compute $\chi$ based on (\ref{eq:res})
			\LState Increment k
			\EndWhile
			\LState \Return  ${F}(\theta, \delta)$ and $\chi$
			\EndProcedure
		\end{algorithmic}
	\end{algorithm}
\end{figure}

	\thispagestyle{empty}
\section{Simulation Results}\label{sec:simulation_results}
In this section, the proposed spatiotemporal framework for the Bernoulli traffic is first validated. Afterwards, insights on the temporal distribution and peak \ac{AoI} for different system parameters are demonstrated. First, framework validation is conducted via independent Monte Carlo simulations. The developed simulation framework incorporates microscopic averaging ensuring ergodicity, in which the steady state statistics of the queues employed at each device are collected. The simulation area is $10 \times 10 \text{ km}^2$ with a wrapped-around boundaries to ensure unbiased statistics imposed by the network boundary devices. In order to ensure that the queues are in a steady state, simulation is first initiated with all queues at the devices as being idle and then it runs for a sufficiently high number of time slots until the steady-state is reached. Let $\hat{x}_0^t$ denotes the average idle steady state probability for the $t$-th iteration across all the devices within the network. Mathematically, the steady state is realized once $||\hat{x}_0^k-\hat{x}_0^{k-1}|| < \varphi$, where $\varphi$ is some predetermined tolerance. After steady state is reached, all temporal statistics are then gathered based on sufficiently large number of microscopic realizations. Unless otherwise stated, we consider the following parameters: $N=10,\; \eta = 4$, $\rho=-90$, $\epsilon = 1$. 
\begin{figure} 
	\centering
	\input{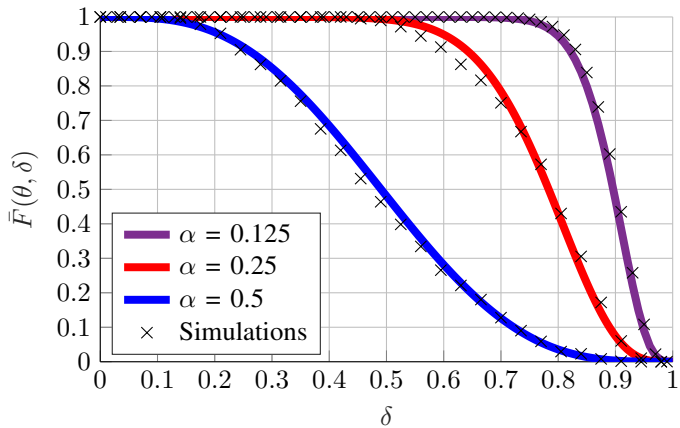}	
	\caption{Model verification for $\theta=1$. Solid lines and black marks depict analysis and simulation, respectively.}
	\label{fig:meta_sim} 
\end{figure} 

In Fig. \ref{fig:meta_sim} we show the meta distribution for $\theta=1$ and different values of $\alpha$. One can observe a close match between the simulation and the proposed framework, which implies that the iterative solution presented in Algorithm 1 in Section \ref{sec:QT_anaylsis} can capture the interdependency between the network-wide aggregate interference and the queues evolution. As the $\alpha$ increases, the capability of a given device to dispatch its packet generated at a given time slot deteriorates, due to the increased channel access and transmission attempts within the network. Accordingly, this packet will attempt retransmissions in consecutive time slots, thus, contributing to the aggregate interference on the devices that attempt transmissions in coming time slots. Such a consequential effect of increased traffic load affects the percentile of devices within the network to achieve a given transmission success probability, as illustrated via the meta distribution.    

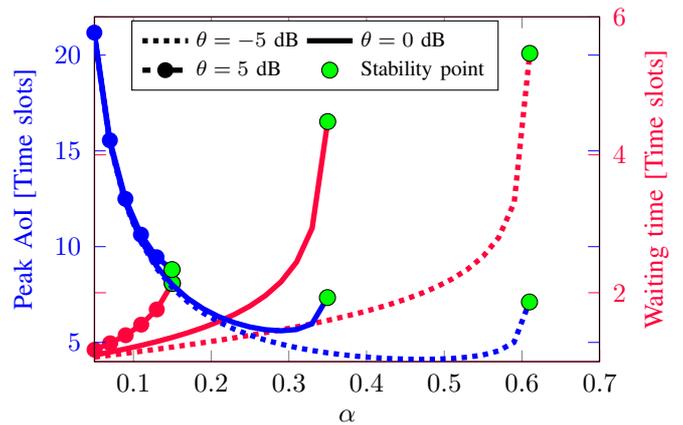
\begin{figure} 
	\centering
	\definecolor{mycolor1}{rgb}{0.0, 0.0, 1.0}
\definecolor{mycolor2}{rgb}{1.0, 0.01, 0.24}
\begin{tikzpicture}
\pgfplotsset{
	width=0.75\columnwidth,
	height=1.8in,
	scale only axis,
	xmin=0.05,
	xmax=0.7,
	legend style={legend columns =2,
		at={(0.8,0.99)},
		anchor=north east
	},
	legend cell align=left,
	ticklabel shift={0.05cm},
	tick label style={/pgf/number format/1000 sep=}
}
\begin{axis}[%
every outer y axis line/.append style={mycolor2},
every y tick label/.append style={font=\color{mycolor2}},
every y tick/.append style={mycolor2},
ymin=1,
ymax=6,
ylabel style={font=\color{mycolor2}},
ylabel={Waiting time [Time slots]},
xlabel={$\alpha$},
yticklabel pos=right
]
\addlegendimage{legend columns=2, style={color=black, dotted, line width=2.0pt]}}
\addlegendentry{\footnotesize$\theta = -5 \text{ dB}$}
\addlegendimage{style={color=black, line width=2.0pt}}
\addlegendentry{\footnotesize$\theta = 0 \text{ dB}$}
\addlegendimage{style={color=black,dash pattern=on 3pt off 6pt on 6pt off 6pt,
		mark=*, mark options={solid}, line width=2.0pt}}
\addlegendentry{\footnotesize$\theta = 5 \text{ dB}$}

\addplot [color=black, only marks, mark size=3pt, mark=*, mark options={solid, fill=green}]
table[row sep=crcr]{%
	0.15	2.13601130109271\\
};\addlegendentry{\footnotesize{Stability point}}

\addplot [color=mycolor2, mark=*, mark options={solid},line width=2.0pt]
table[row sep=crcr]{%
	0.05	1.17360520490003\\
	0.07	1.26722974181006\\
	0.09	1.38401587897303\\
	0.11	1.53748731610666\\
	0.13	1.75772224453968\\
	0.15	2.13601130109271\\
	0.17	inf\\
	0.19	inf\\
	0.21	inf\\
	0.23	inf\\
	0.25	inf\\
	0.27	inf\\
	0.29	inf\\
	0.31	inf\\
	0.33	inf\\
	0.35	inf\\
	0.37	inf\\
	0.39	inf\\
	0.41	inf\\
	0.43	inf\\
	0.45	inf\\
	0.47	inf\\
	0.49	inf\\
	0.51	inf\\
	0.53	inf\\
	0.55	inf\\
	0.57	inf\\
	0.59	inf\\
	0.61	inf\\
	0.63	inf\\
	0.65	inf\\
	0.67	inf\\
	0.69	inf\\
	0.71	inf\\
	0.73	inf\\
	0.75	inf\\
	0.77	inf\\
	0.79	inf\\
};
\addplot [color=black, only marks, mark size=3pt, mark=*, mark options={solid, fill=green}]
table[row sep=crcr]{%
	0.35	4.48216917628451\\
};
\addplot [color=black, only marks, mark size=3pt, mark=*, mark options={solid, fill=green}]
table[row sep=crcr]{%
	0.15	2.13601130109271\\
};

\addplot [color=mycolor2, line width=2.0pt]
table[row sep=crcr]{%
	0.05	1.09615375938418\\
	0.07	1.13935830606912\\
	0.09	1.18592187637567\\
	0.11	1.23641776944209\\
	0.13	1.29158608006148\\
	0.15	1.35235526369897\\
	0.17	1.42002093006921\\
	0.19	1.49631010569694\\
	0.21	1.58362167583047\\
	0.23	1.68563005643157\\
	0.25	1.80794772258687\\
	0.27	1.9599426257718\\
	0.29	2.15950128915614\\
	0.31	2.44583173683114\\
	0.33	2.93680803069121\\
	0.35	4.48216917628451\\
	0.37	inf\\
	0.39	inf\\
	0.41	inf\\
	0.43	inf\\
	0.45	inf\\
	0.47	inf\\
	0.49	inf\\
	0.51	inf\\
	0.53	inf\\
	0.55	inf\\
	0.57	inf\\
	0.59	inf\\
	0.61	inf\\
	0.63	inf\\
	0.65	inf\\
	0.67	inf\\
	0.69	inf\\
	0.71	inf\\
	0.73	inf\\
	0.75	inf\\
	0.77	inf\\
	0.79	inf\\
};

\addplot [color=mycolor2, dotted, line width=2.0pt]
table[row sep=crcr]{%
	0.05	1.06612918647197\\
	0.07	1.09372447480384\\
	0.09	1.12204582214216\\
	0.11	1.15115125126206\\
	0.13	1.18110523391927\\
	0.15	1.21198065279054\\
	0.17	1.24386170039379\\
	0.19	1.27684568542407\\
	0.21	1.311041837097\\
	0.23	1.34658104244248\\
	0.25	1.38360930045667\\
	0.27	1.42231570170467\\
	0.29	1.46291350035142\\
	0.31	1.50565963631341\\
	0.33	1.5508728051319\\
	0.35	1.59894494001488\\
	0.37	1.65036779796704\\
	0.39	1.70577751842421\\
	0.41	1.76600675985318\\
	0.43	1.83213814005848\\
	0.45	1.90568622713971\\
	0.47	1.98889854047503\\
	0.49	2.08497552375595\\
	0.51	2.1991781703361\\
	0.53	2.34049783957242\\
	0.55	2.5266598304001\\
	0.57	2.79998995512125\\
	0.59	3.30689162929792\\
	0.61	5.47158987551092\\
	0.63	inf\\
	0.65	inf\\
	0.67	inf\\
	0.69	inf\\
	0.71	inf\\
	0.73	inf\\
	0.75	inf\\
	0.77	inf\\
	0.79	inf\\
};

\addplot [color=black, only marks, mark size=3pt, mark=*, mark options={solid, fill=green}]
table[row sep=crcr]{%
	0.61	5.47158987551092\\
};

\end{axis}

\begin{axis}[%
every y tick label/.append style={font=\color{mycolor1}},
every y tick/.append style={mycolor1},
ymin=4,
ymax=22,
ylabel style={font=\color{mycolor1}},
ylabel={Peak AoI [Time slots]},
yticklabel pos=left,
xtick=\empty
]
\addplot [color=mycolor1, mark=*, mark options={solid}, line width=2.0pt]
table[row sep=crcr]{%
	0.05	21.1736052049\\
	0.07	15.5529440275243\\
	0.09	12.4951269900841\\
	0.11	10.6283964070157\\
	0.13	9.45002993684737\\
	0.15	8.80267796775938\\
	0.17	inf\\
	0.19	inf\\
	0.21	inf\\
	0.23	inf\\
	0.25	inf\\
	0.27	inf\\
	0.29	inf\\
	0.31	inf\\
	0.33	inf\\
	0.35	inf\\
	0.37	inf\\
	0.39	inf\\
	0.41	inf\\
	0.43	inf\\
	0.45	inf\\
	0.47	inf\\
	0.49	inf\\
	0.51	inf\\
	0.53	inf\\
	0.55	inf\\
	0.57	inf\\
	0.59	inf\\
	0.61	inf\\
	0.63	inf\\
	0.65	inf\\
	0.67	inf\\
	0.69	inf\\
	0.71	inf\\
	0.73	inf\\
	0.75	inf\\
	0.77	inf\\
	0.79	inf\\
};

\addplot [color=black, only marks, mark size=3pt, mark=*, mark options={solid, fill=green}]
table[row sep=crcr]{%
	0.15	8.80267796775938\\
};
\addplot [color=mycolor1, line width=2.0pt]
table[row sep=crcr]{%
	0.05	21.0961537593842\\
	0.07	15.4250725917834\\
	0.09	12.2970329874868\\
	0.11	10.3273268603512\\
	0.13	8.98389377236917\\
	0.15	8.01902193036564\\
	0.17	7.30237387124569\\
	0.19	6.75946800043378\\
	0.21	6.34552643773523\\
	0.23	6.03345614338809\\
	0.25	5.80794772258687\\
	0.27	5.66364632947551\\
	0.29	5.6077771512251\\
	0.31	5.67163818844404\\
	0.33	5.96711106099424\\
	0.35	7.33931203342737\\
	0.37	inf\\
	0.39	inf\\
	0.41	inf\\
	0.43	inf\\
	0.45	inf\\
	0.47	inf\\
	0.49	inf\\
	0.51	inf\\
	0.53	inf\\
	0.55	inf\\
	0.57	inf\\
	0.59	inf\\
	0.61	inf\\
	0.63	inf\\
	0.65	inf\\
	0.67	inf\\
	0.69	inf\\
	0.71	inf\\
	0.73	inf\\
	0.75	inf\\
	0.77	inf\\
	0.79	inf\\
};

\addplot [color=black, only marks, mark size=3pt, mark=*, mark options={solid, fill=green}]
table[row sep=crcr]{%
	0.35	7.33931203342737\\
};

\addplot [color=mycolor1, dotted, line width=2.0pt]
table[row sep=crcr]{%
	0.05	21.066129186472\\
	0.07	15.3794387605181\\
	0.09	12.2331569332533\\
	0.11	10.2420603421712\\
	0.13	8.87341292622696\\
	0.15	7.87864731945721\\
	0.17	7.12621464157026\\
	0.19	6.54000358016092\\
	0.21	6.07294659900176\\
	0.23	5.694407129399\\
	0.25	5.38360930045667\\
	0.27	5.12601940540837\\
	0.29	4.91118936242039\\
	0.31	4.73146608792632\\
	0.33	4.58117583543493\\
	0.35	4.45608779715774\\
	0.37	4.35307050066974\\
	0.39	4.26988008252678\\
	0.41	4.20503115009709\\
	0.43	4.15771953540732\\
	0.45	4.12790844936193\\
	0.47	4.11655811494311\\
	0.49	4.12579185028657\\
	0.51	4.15996248406159\\
	0.53	4.22729029240261\\
	0.55	4.34484164858192\\
	0.57	4.55437592003353\\
	0.59	5.00180688353521\\
	0.61	7.110934137806\\
	0.63	inf\\
	0.65	inf\\
	0.67	inf\\
	0.69	inf\\
	0.71	inf\\
	0.73	inf\\
	0.75	inf\\
	0.77	inf\\
	0.79	inf\\
};

\addplot [color=black, only marks, mark size=3pt, mark=*, mark options={solid, fill=green}]
table[row sep=crcr]{%
	0.61	7.110934137806\\
};
\end{axis}
\end{tikzpicture}%
	\caption{Peak AoI (left) and average waiting time (right) for increasing packet generation periodicity ($T$) and  $\theta$.}
	\label{fig:AoI_1} 
\end{figure} 

Moving to the \ac{AoI} in Fig. \ref{fig:AoI_1}, the peak \ac{AoI} along with average waiting time are presented. As explained in Section \ref{sec:AoI}, the peak \Ac{AoI} depends on the inter-arrival and system waiting times of a randomly selected packet within the queue. First we investigate the effect of $\theta$. As $\theta$ increases, packets are subjected to a more stringent requirement on their achieved SIR. This leads to increased retransmissions, thus, increasing the mutual interference due to lower idle probabilities. The increased mutual interference hinders the successful departure of the packets from their respective queues and lead to queue instability in some devices, yielding infinite waiting times and peak \ac{AoI}. Moving to the arrival probability, for low values of $\alpha$, the inter-arrival component dominates, yielding high values of peak \ac{AoI}, while the waiting time is low. As $\alpha$ increases, the waiting times dominates, yielding an increase in the peak \ac{AoI} till point of queue instability, as indicated by the stability point. The effect of $\theta$ on the stability frontiers can be explained in a similar fashion to that of Fig. \ref{fig:meta_sim}, where increasing $\theta$ diminishes the stability region due to the increased network-wide aggregate interference. .  

Finally, Fig. \ref{fig:AoI_2} presents the peak \ac{AoI} among the different \ac{QoS} classes within the network. The shown classes are sorted in an ascending order with respect to $d_n$.  For $\alpha=0.05$, the inter-arrival times dominates the peak \ac{AoI}, leading to a nearly-constant peak \ac{AoI} over all the classes. The location-dependency is more clear for $\alpha=0.15$ and $\alpha=0.25$. Consequently, classes with lower indices experience large peak \ac{AoI} due to their larger waiting times (i.e., effect of the location dependency captured via the meta distribution). For large traffic load (i.e., $\alpha = 0.25$), all except last two classes are unstable, which lead to infinite peak \ac{AoI}.
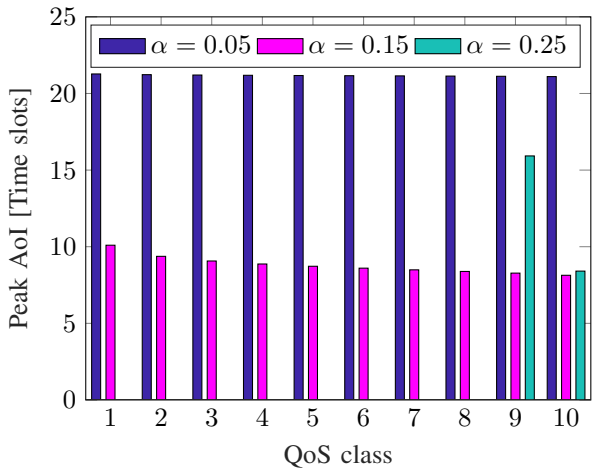
\begin{figure}
	\centering
	\definecolor{mycolor1}{rgb}{0.24220,0.15040,0.66030}%
\definecolor{mycolor2}{rgb}{1,0.0,1.0}%
\definecolor{mycolor3}{rgb}{0.09640,0.75000,0.71204}%
\definecolor{mycolor4}{rgb}{0.69, 0.4, 0.0}%
\definecolor{mycolor5}{rgb}{0.97690,0.98390,0.08050}%

\begin{tikzpicture}
\begin{axis}[%
width=0.75\columnwidth,
height=2in,
scale only axis,
bar shift auto,
xmin=0.511111111111111,
xmax=10.4888888888889,
xtick={ 1,  2,  3,  4,  5,  6,  7,  8,  9, 10},
xlabel style={font=\color{white!15!black}},
xlabel={QoS class},
ymin=0,
ymax=25,
ylabel style={font=\color{white!15!black}},
ylabel={Peak AoI [Time slots]},
axis background/.style={fill=white},
legend style={legend columns=-1, legend cell align=left, align=left, draw=white!15!black}
]
\addplot[ybar, bar width=0.178, fill=mycolor1, draw=black, area legend] table[row sep=crcr] {%
	1	21.2741981229096\\
	2	21.2281323270725\\
	3	21.204349022023\\
	4	21.1871202467461\\
	5	21.172904545306\\
	6	21.1602155868289\\
	7	21.1481502273588\\
	8	21.135893745752\\
	9	21.1221916106923\\
	10	21.1028966143111\\
};
\addplot[forget plot, color=white!15!black] table[row sep=crcr] {%
	0.511111111111111	0\\
	10.4888888888889	0\\
};
\addlegendentry{$\alpha = 0.05$}

\addplot[ybar, bar width=0.178, fill=mycolor2, draw=black, area legend] table[row sep=crcr] {%
	1	10.0982823481385\\
	2	9.37216774257099\\
	3	9.06839059465288\\
	4	8.87241488392602\\
	5	8.72407469905245\\
	6	8.60087953891906\\
	7	8.49116007365773\\
	8	8.38655384498658\\
	9	8.2771907465243\\
	10	8.13566520516527\\
};
\addplot[forget plot, color=white!15!black] table[row sep=crcr] {%
	0.511111111111111	0\\
	10.4888888888889	0\\
};
\addlegendentry{$\alpha = 0.15$}

\addplot[ybar, bar width=0.178, fill=mycolor3, draw=black, area legend] table[row sep=crcr] {%
	1	inf\\
	2	inf\\
	3	inf\\
	4	inf\\
	5	inf\\
	6	inf\\
	7	inf\\
	8	inf\\
	9	15.9246065812388\\
	10	8.40814814680575\\
};
\addplot[forget plot, color=white!15!black] table[row sep=crcr] {%
	0.511111111111111	0\\
	10.4888888888889	0\\
};
\addlegendentry{$\alpha = 0.25$}

\end{axis}
\end{tikzpicture}%
	\caption{Peak AoI for $N=10$ QoS classes and $\theta = 5$ dB.}
	\label{fig:AoI_2} 
\end{figure} 
	\thispagestyle{empty}
\section{Conclusion}\label{sec:Conclusion}

We present a tractable spatiotemporal mathematical framework to characterize the peak \ac{AoI} in uplink \ac{IoT} networks. We leverage tools from stochastic geometry to analyze the location-dependent performance of the network under Bernoulli traffic via the meta distribution. In addition, tools from queueing theory are utilized to track the queues evolution at each device and derive interference-aware expressions for the queue distribution and packets delay. To this end, an iterative algorithm is presented to solve the developed spatiotemporal model. Expressions for the packets waiting distribution within the queue and the peak \ac{AoI} are provided. Simulation results are presented to validate the proposed spatiotemporal framework for different traffic load scenarios. The peak \ac{AoI} along with the stability frontiers of the network are presented. It is shown that the peak AoI  highly depends on the spatial location, traffic arrival load, and decoding threshold. 
	\thispagestyle{empty}
\bibliographystyle{./lib/IEEEtran.cls}
\bibliography{./literature/Literature_Local}

	\thispagestyle{empty}
	\appendices

\section{Proof of Lemma 1}\label{se:Appendix_A}

The $b$-th moment of the transmission success probability can be derived from eq.(\ref{eq:PS}) as
\begin{equation}
M_b = \mathbb{E}^{!}_{r_i,P_i,r_o} \Big[ \prod_{r_i \in \Phi_o} \Big(\frac{\bar{\chi}}{1 + \frac{\theta P_ir_o^{\eta(1-\epsilon)}}{\rho r_i^{\eta}}} + 1-\bar{\chi} \Big)^b \Big].
\end{equation}
Moreover, the uplink transmission power $P_i$ of the $i$-th device is a random variable due to the employed fractional path-loss power control. This imposes correlation between the transmission powers of different devices due to the Voronoi sizes correlation  of their serving \acp{BS}. Similar to \cite{Singh2015, ElSawy_meta}, the transmission power correlations are ignored for mathematical tractability. The interfering devices are approximated with the interference seen from an inhomogeneous \ac{PPP} $\tilde{\mathrm{\Phi}} \in \mathbb{R}^2$ with the following intensity function 
$\lambda(x) = \lambda(1-e^{-\pi\lambda||x||^2})$. Following \cite[Theorem 1]{ElSawy_meta} via applying the mapping and displacement theorem on $\tilde{\mathrm{\Phi}}$, the approximated expressions for the moments are derived, where the activity profile depicts the aggregate probability of having a packet within the queue awaiting transmission, which is $1-\chi$, where $\chi$ is the aggregate idle probability.
	
\end{document}